\def\BibTeX{{\rm B\kern-.05em{\sc i\kern-.025em b}\kern-.08em
    T\kern-.1667em\lower.7ex\hbox{E}\kern-.125emX}}
\newcommand{\bbF}{\mathbb F}
\newtheorem{thm}{Theorem}
\newtheorem{prop}[thm]{Proposition}
\theoremstyle{definition}
\newtheorem{rmk}[thm]{Remark}
\numberwithin{equation}{section}
\numberwithin{figure}{section}
\begin{document}
\history{Accepted February 8, 2022.}
\doi{10.1109/TQE.2020.DOI}

\title{A Grover Search-based Algorithm for the List Coloring Problem}
\author{\uppercase{Sayan Mukherjee}\authorrefmark{1}}
\address[1]{blueqat Co. Ltd., Tokyo 150-0002, Japan}
\address[1]{University of Illinois at Chicago, Chicago, IL-60608, USA}

\markboth
{Mukherjee: A Grover Search-based Algorithm for the List Coloring Problem}
{Mukherjee: A Grover Search-based Algorithm for the List Coloring Problem}

\corresp{(email: sayan@blueqat.com).}

\begin{abstract}
Graph coloring is a computationally difficult problem, and currently the best known classical algorithm for $k$-coloring of graphs on $n$ vertices has runtimes $\Omega(2^n)$ for $k\ge 5$. The list coloring problem asks the following more general question: given a \emph{list} of available colors for each vertex in a graph, does it admit a proper coloring? We propose a hybrid classical-quantum algorithm based on Grover search~\cite{grover-DatabaseSearch-1996} to quadratically speed up exhaustive search. Our algorithm loses in complexity to classical ones in specific restricted cases, but improves exhaustive search for cases where the lists and graphs considered are arbitrary in nature.
\end{abstract}

\begin{keywords}
Graph coloring, Grover search, Hybrid algorithm
\end{keywords}

\titlepgskip=-15pt

\maketitle

\section{Introduction}
\label{sec:introduction}
\PARstart{G}{raph} coloring problems provide for a rich family of NP-complete problems in theoretical computer science. 
While exhaustive search is believed to be the fastest classical approach for several NP-complete problems including satisfiability and hitting-set~\cite{cygan-ProblemsCNFSAT-2016}, there are much better classical algorithms using dynamic programming, inclusion-exclusion and other structural approaches for problems such as graph coloring~\cite{eppstein-MISGraphCol-2001,bodlaender-PolyMemGraphCol-2006,koivisto-InclExclGraphCol-2006}, the traveling salesman problem~\cite{little-TSP-1963,hoffman-TSP-2013}, set cover~\cite{hua-DPSetCover-2010} etc.
Several authors have obtained quantum speedup on these classical algorithms~\cite{ambainis-DPQuantumSpeedup-2019,ronagh-DPQuantum-2019,shimizu-ExpTimeQuantAlgoGraphColoring-2021}; however, all of these algorithms have the limitation that they cannot be easily generalized to the list coloring problem.

Given a finite graph $G=(V,E)$, a proper coloring of $G$ is a function $\chi:V\to\mathbb N$ such that for every edge $uv\in E$, $\chi(u)\neq \chi(v)$. The list coloring problem tries to determine a proper coloring $\chi$ of a graph $G=(V,E)$, given a list $L_v$ of available colors for each vertex $v$. In other words, it is forced that $\chi(v)\in L_v$. When $L_v=\{1,2,\ldots,k\}$ for every vertex $v$ this reduces to the well-studied $k$-coloring problem.
We propose a simple Grover search-based approach to obtain a quadratic speedup on exhaustive search for the list coloring problem.

Grover's algorithm~\cite{grover-DatabaseSearch-1996} is known to speed up unstructured search quadratically using the technique of \textit{amplitude amplification}. In its simplest form, to find some marked elements from a list of $N=2^n$ entries, the algorithm starts with a uniform quantum superposition of all $2^n$ basis states of an $n$-qubit register. It then amplifies the amplitudes of the searched state and reduces those of the other states, such that a measurement of the $n$ qubits leads to one of the searched states with high probability.

Grover's algorithm has been used to obtain quantum speedups for various problems in combinatorial optimization and computer science (see, for e.g., \cite{kravchenko-SubtractionGames2019,Shukla-TrajectoryQuantumOpt-2019,Khadiev-StringProblems-2019,jeffery-BooleanMatrixMultGraphCollision-2016,lee-TriangleDetectionAssociativityTesting-2017}). Needless to say, graph coloring problems are also not an exception in the literature, and have been attacked using quantum annealing~\cite{titiloye-AnnealingColoring-2011,kudo-ConstrainedAnnealing-2018}, hybrid approaches~\cite{titiloye-HybridAnnealingColoring-2011,bravyi-HybridApproximateColoring-2020}, as well as using Grover search~\cite{wang-TernaryGrover-2011,shimizu-ExpTimeQuantAlgoGraphColoring-2021,saha-CircuitDesignColoringNearTerm-2020}.

In \cite{wang-TernaryGrover-2011}, a qutrit-based approach has been used to demonstrate the cost-efficiency of ternary quantum logic; however, their main algorithm is not realizable right now on NISQ devices. The algorithm of \cite{shimizu-ExpTimeQuantAlgoGraphColoring-2021} has the same issue as it requires quantum RAM which has not been realized at this moment.
On the other hand, the authors of \cite{saha-CircuitDesignColoringNearTerm-2020} and \cite{saha-SynthVertexColGrover-2015} demonstrate a quantum algorithm solving the $k$-coloring problem on NISQ devices, comparing the efficiency of their algorithm against the reduction of 3-SAT to 3-coloring approach of Hu et. al.~\cite{hu-ReductionBasedProblemMapping-2019}.

All of these algorithms use an oracle design which uses binary comparators, and provide solutions where almost all binary strings have positive probabilities of being selected, including those that do not represent valid colorings. Our approach circumvents this problem via a modified initialization and diffusion operator that restricts the evolution of the quantum algorithm to the only $\prod_{v\in L_v}|L_v|$ plausible states. Note that this is the total number of valid colorings when the underlying graph is empty. We achieve this via the restricted version of Grover search~\cite{grover-DatabaseSearch-1996,gilliam-GroverAdaptiveSearch-2021}.

\begin{prop}[Restricted Grover search]
\label{prop:subspaceGrover}
	Let $S\subseteq \{1,2,\ldots, 2^n-1\}$, and suppose $S'\subsetneq S$ is a set of marked states. Let $O$ be an oracle that marks these states and requires $a$ ancillas. Then, there is a quantum circuit on $n+a+1$ qubits which makes $O(\sqrt{2^n/|S'|})$ queries, which when measured, gives one of the marked states with high probability. Further, states outside $S$ are never measured.
\end{prop}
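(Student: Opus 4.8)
The plan is to adapt the standard amplitude-amplification argument, replacing the two places where ordinary Grover search ``sees'' the full $2^n$-dimensional space by operators confined to the subspace $H_S = \mathrm{span}\{|s\rangle : s\in S\}$. Concretely, I would first build a state-preparation unitary $A$ on the $n$-qubit register with $A|0\rangle = |\psi_S\rangle := |S|^{-1/2}\sum_{s\in S}|s\rangle$, where $|0\rangle$ is the all-zeros $n$-qubit state and $|\psi_S\rangle$ is the uniform superposition over $S$; such an $A$ exists and is efficient for the product-structured $S$ arising from the lists $L_v$. I would then introduce the good and bad unit vectors $|g\rangle = |S'|^{-1/2}\sum_{s\in S'}|s\rangle$ and $|b\rangle = (|S|-|S'|)^{-1/2}\sum_{s\in S\setminus S'}|s\rangle$, so that $|\psi_S\rangle = \sin\theta\,|g\rangle + \cos\theta\,|b\rangle$ with $\sin\theta = \sqrt{|S'|/|S|}$, and the entire computation lives in the two-dimensional real subspace $\mathrm{span}\{|g\rangle,|b\rangle\}$.

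Next I would assemble the Grover iterate $G = D\,O$, where $O$ is the given marking oracle, realized as a phase oracle on $S'$ by the usual phase-kickback trick: the extra single qubit (the ``$+1$'' in $n+a+1$) is initialized to $|-\rangle$, and running $O$ with its $a$ ancillas flips the phase of exactly the states of $S'$. The diffusion operator is the reflection about the initial state, $D = A\,(2|0\rangle\langle 0| - I)\,A^\dagger = 2|\psi_S\rangle\langle\psi_S| - I$, whose inner reflection $2|0\rangle\langle 0| - I$ is a multicontrolled phase gate. A direct computation in the $\{|g\rangle,|b\rangle\}$ basis shows that $G$ acts as a rotation by angle $2\theta$, exactly as in textbook Grover, so after $m = \lfloor \pi/(4\theta)\rfloor$ iterations the state is within $O(\theta)$ of $|g\rangle$ and a measurement returns an element of $S'$ with probability $1 - O(\sin^2\theta)$.

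The query count is then $m = O(1/\theta) = O(\sqrt{|S|/|S'|})$, and since $S\subseteq\{1,\dots,2^n-1\}$ forces $|S| < 2^n$, this is $O(\sqrt{2^n/|S'|})$ as claimed. The ``never measured outside $S$'' guarantee follows from a subspace-invariance observation that I would isolate as a one-line lemma: the oracle $O$ only imprints phases and hence fixes $H_S$ setwise, while for the diffusion any $|\phi\rangle\in H_S$ maps to $2\langle\psi_S|\phi\rangle\,|\psi_S\rangle - |\phi\rangle \in H_S$ because $|\psi_S\rangle\in H_S$; thus $G$ preserves $H_S$, and since the run begins at $|\psi_S\rangle\in H_S$, the state remains a superposition of basis vectors indexed by $S$ at every step, so no $|t\rangle$ with $t\notin S$ ever acquires nonzero amplitude.

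The main obstacle I anticipate is not the rotation analysis, which is standard, but justifying and exhibiting the modified initialization $A$ (and hence the nonstandard diffusion $D$): one must argue that a uniform superposition over an essentially arbitrary $S$ can be prepared unitarily, and, for the algorithm to be useful rather than merely correct, that it can be prepared \emph{efficiently}. I would handle this by invoking the product structure of the relevant $S$ in the list-coloring setting (a tensor product of per-vertex uniform superpositions over $L_v$), so that $A$ and $A^\dagger$ are cheap, and otherwise appeal to the general restricted-Grover construction of \cite{gilliam-GroverAdaptiveSearch-2021}. A secondary, purely bookkeeping point is the rounding of $m$ together with the mild assumption $|S'| \le |S|/2$ (or a fixed-point amplitude-amplification variant) needed to keep the success probability bounded away from zero when $\theta$ is not small.
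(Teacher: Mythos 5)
Your proposal is correct and follows essentially the same route as the paper, which proves the proposition only by the sketch in Section 2.2: prepare $A\lvert 0\rangle_n$ as the uniform superposition over $S$, replace the diffusion by $ADA^\dagger$ so the Grover iterate is a reflection about $A\lvert 0\rangle_n$, and observe that the iteration stays in the span of $S$. Your write-up is in fact more complete than the paper's, since you make the two-dimensional rotation analysis and the subspace-invariance step explicit rather than asserting them.
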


Additionally, we use an oracle design different from those in \cite{wang-TernaryGrover-2011,saha-CircuitDesignColoringNearTerm-2020}, and give a classical algorithm in Section 3 that can reduce the complexity of this oracle in several special cases (such as for the $3$-coloring or $4$-coloring problems). As a corollary of Proposition~\ref{prop:subspaceGrover}, our main theorem provides an algorithm for the list coloring problem.

\begin{thm}[Quantum list coloring algorithm]
\label{thm:qListColoring}
Given a graph $G=(V,E)$ on $n$ vertices and $m$ edges and lists of available colors $\{L_v: v\in V\}$, there exists a $(\sum_{v\in V}\lceil\log_2 |L_v|\rceil+m+1)$-qubit quantum algorithm with query complexity $O(\prod_{v\in V}|L_v|^{1/2})$ that returns a valid list coloring of $G$ with high probability.
\end{thm}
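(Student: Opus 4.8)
The plan is to realise Theorem~\ref{thm:qListColoring} as a direct instantiation of Proposition~\ref{prop:subspaceGrover}, so the real work is to exhibit the encoding, the search space, and the oracle, and then to read off the parameters. First I would fix the register layout: to each vertex $v\in V$ I assign $\lceil\log_2\max L_v\rceil$ qubits, giving a total of $N:=\sum_{v\in V}\lceil\log_2\max L_v\rceil$ ``color'' qubits, which plays the role of $n$ in the proposition. A computational basis state of these $N$ qubits is read as an assignment of one integer to each vertex (with colors offset so that the plausible assignments land inside $\{1,\ldots,2^N-1\}$, as the proposition requires). I would then define $S$ to be the set of \emph{plausible} states, namely those assignments in which each $v$ is given a color in its own list $L_v$, so that $|S|=\prod_{v\in V}|L_v|$; and let $S'\subseteq S$ be the \emph{proper} colorings, i.e. those plausible assignments with $\chi(u)\neq\chi(v)$ for all $uv\in E$. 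For any graph with an edge that admits a proper list coloring we have $\varnothing\neq S'\subsetneq S$, matching the hypothesis of the proposition.

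The key construction is the oracle $O$ marking exactly $S'$. For each edge $uv\in E$ I would reversibly test the color registers of $u$ and $v$ for equality --- padding the shorter register with zeros and comparing the bits pairwise --- and record the outcome in one dedicated ``conflict'' ancilla; any scratch qubits used inside a single comparison are uncomputed immediately and recycled for the next edge, so the standing ancilla cost is exactly one qubit per edge, i.e. $a=m$. A basis state then belongs to $S'$ precisely when all $m$ conflict ancillas are $0$, which I detect by a multiply-controlled gate onto the proposition's single flag qubit, after which the conflict ancillas are uncomputed. This puts the oracle on $N+m$ qubits plus the one flag qubit, for a total of $N+m+1=\sum_{v\in V}\lceil\log_2\max L_v\rceil+m+1$ qubits, exactly as claimed.

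Feeding this $S$, $S'$ and $O$ into Proposition~\ref{prop:subspaceGrover} then produces a circuit that, on measurement, returns an element of $S'$ --- hence a valid list coloring --- with high probability, and never returns a state outside $S$, so the measured assignment always respects the lists. For the query complexity, the restricted search over the plausible subspace $S$ makes $O\bigl(\sqrt{|S|/|S'|}\bigr)$ queries (the $2^n$ of the proposition being the size of the search space, here $|S|=\prod_{v\in V}|L_v|$); using $L_v\subseteq\{1,\ldots,\max L_v\}$, hence $|L_v|\le\max L_v$, together with $|S'|\ge1$, this is bounded by $O\bigl(\sqrt{\textstyle\prod_{v\in V}|L_v|}\bigr)=O\bigl(\prod_{v\in V}(\max L_v)^{1/2}\bigr)$, which is the stated bound. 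If $|S'|$ is not known in advance I would invoke the usual doubling schedule on the iteration count, running up to $O(\sqrt{|S|})$ iterations; this leaves the worst-case bound unchanged.

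I expect the main obstacle to sit in two places, both of which Proposition~\ref{prop:subspaceGrover} is designed to absorb. The first is the restricted initialization: preparing the uniform superposition over $S$ amounts to preparing, for each vertex, a uniform superposition over an arbitrary size-$|L_v|$ subset of its register values, which is awkward when $|L_v|$ is not a power of two; I would rely on the modified initialization and diffusion operator provided by the proposition rather than reinvent it here. The second is keeping the equality comparisons fully reversible while recycling scratch qubits, so that the ancilla budget genuinely stays at $m$ and the total qubit count hits $N+m+1$ on the nose. Finally, I would record the degenerate edgeless case ($E=\varnothing$, where $S'=S$ and the hypothesis $S'\subsetneq S$ fails) separately, since there every list-respecting assignment is already proper and no search is needed.
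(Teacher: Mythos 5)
Your proposal is correct and reaches the theorem by the same overall architecture as the paper --- restricted Grover search over the product space $S=\prod_{v\in V}L_v$ with $\lceil\log_2\max L_v\rceil$ qubits per vertex, one validity ancilla per edge, a global multi-controlled flip onto a single phase-kickback qubit, uncomputation of the edge register, and the $ADA^\dagger$ diffusion --- but it substitutes a genuinely different edge oracle. The paper deliberately avoids comparators: for each edge it synthesizes a bespoke circuit $O_{u,v}$ (Algorithm~\ref{algo:oracle-uv}) as a product of multi-controlled NOTs chosen by solving a linear system over $\mathbb{F}_2$ (one equation per pair in $L_u\times L_v$) followed by a linear program minimizing total control count, motivated by the claim that a comparator forces every vertex register to width $\max_v j_v$. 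Your zero-padded comparator sidesteps that objection: XOR the overlapping bits of $u$'s register onto $v$'s, then apply one NOT on the edge qubit anticontrolled on $v$'s register together with $u$'s surplus high bits, then undo the XORs --- this needs no scratch qubits at all, so the ancilla budget is exactly $m$ and the total is $\sum_{v}j_v+m+1$ as claimed (you left this as an open worry; it closes cleanly). What the paper's construction buys is gate-cost optimality among oracles built from controlled NOTs onto the edge qubit, which is irrelevant to the qubit and query counts in the theorem statement, so your oracle proves the theorem equally well. Two smaller points in your favour: you correctly read the query bound of Proposition~\ref{prop:subspaceGrover} as $O(\sqrt{|S|/|S'|})$ rather than the literal $O(\sqrt{2^n/|S'|})$ (the literal reading would not yield $O(\prod_{v}(\max L_v)^{1/2})$, since $2^{\sum_v j_v}$ can exceed $\prod_v\max L_v$ by a factor of up to $2^n$), and you flag the unknown-$|S'|$ iteration count and the degenerate edgeless case, which the paper glosses over. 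Like the paper, you treat the restricted initialization largely as a black box; the paper does exhibit $U_v$ explicitly via Gram--Schmidt, so a fully self-contained version of your argument would need to add that construction rather than defer it to the proposition.
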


This paper is organized as follows. In Section~\ref{sec:groveralgo}, we describe Grover's algorithm and a gate-level implementation. Section~\ref{sec:qlistcoloring} is devoted to tackling the list coloring problem, and proves Theorem~\ref{thm:qListColoring}. In Section~\ref{sec:results}, we run experiments on classical simulators as well as real quantum machines, and compare the outcomes. We discuss applications and provide concluding remarks in Section~\ref{sec:conclusion}.

\section{Grover's Algorithm}
\label{sec:groveralgo}
In this section, we provide a concise exposition on Grover search.
The main idea behind Grover search is to amplify the amplitudes of some number of marked states (states which are being searched for), and consequentially decrease that of unmarked states.
Grover's Algorithm requires three different operators: Initialization, Oracle, and Diffusion. Below we present two formulations of the algorithm:

\subsection{Unrestricted search space}
When searching for a marked state among the full search space $S=\{0,1\}^n$, the initialization step of the algorithm creates a uniform superposition of all the possible states of an $n$-qubit system. This is achieved via appending Hadamard gates on each qubit: 
\[H^{\otimes n}\lvert 0\rangle_n = \lvert +\rangle^{n} = \frac{1}{\sqrt{2^n}}\sum_{i=1}^n \lvert i\rangle_n.\]
Here we abuse notation and write $\lvert i\rangle_n$ to denote the state corresponding to an $n$-digit binary representation of $i$.

Next, Grover's Algorithm requires an oracle $O$ that, given a uniform superposition of all $2^n$ possible states, can change the sign of the marked states. Let $S'\subseteq \{0,1,\ldots 2^n-1\}$ be a set of marked states. The Oracle $O$ then switches the signs of the states in $S'$, i.e.
\[
O\lvert i\rangle_n = \left\{\begin{array}{cl} \lvert i\rangle_n, &i\not\in S \\ - \lvert i\rangle_n, & i\in S.\end{array}\right.
\]
The circuit implementation of the oracle $O$ usually is the most difficult (and computationally expensive) part of the algorithm, and one of the most basic implementations requires the usage of phase kickback~\cite{cleve-QuantumAlgorithmsRevisited-1998}.

The final component of Grover's Algorithm is the diffusion operator $D$, which can be thought of as a reflection around the vector $\lvert 0\rangle^{n}$.
As an operator, we have
\[
D = 2\lvert 0\rangle_n\langle 0\rvert_n  - I.
\]
$D$ is usually implemented using phase kickback in the same fashion as the oracle $O$.

Grover's algorithm requires repeated usage of the operator $G = H^{\otimes n} D H^{\otimes n} O$ which has the net effect of reflecting around $\lvert +\rangle^n$, amplifying the amplitudes of marked states and decreases those of other states.
Measuring the state $G^rH^{\otimes n} \lvert 0\rangle_n$ ($r\ge 1$) gives one of the marked states with high probability, and this probability is maximum when $r = \lfloor\frac{\pi}{4}\sqrt{2^n/|S'|}\rfloor$.
Since $|S'|$ is not known in general, the $r$ is either randomly selected~\cite{durr-hoyer-QuantumMin-1996,boyer-TightQuantumSearchingBounds-1998}, or is estimated using quantum counting algorithms~\cite{mosca-quantumCounting-1998,aaronson-Counting-2020}.

See Figure~\ref{fig:groverExample} for an example of a circuit implementing unrestricted Grover search with $n=3$, $S'=\{\lvert 010\rangle_n, \lvert 011\rangle_n\}$, $S = \{0,1,\ldots, 7\}$.

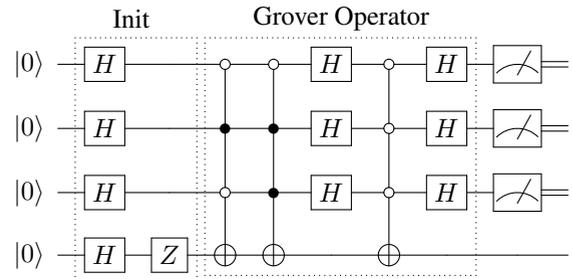
\begin{figure}[ht]
	\[
	\Qcircuit @C=1em @R=1em {
		&\qquad \mbox{Init} & & & &\ \ \ \mbox{Grover Operator} & & & & \\
		\lstick{\ket{0}} & \gate{H} & \qw & \ctrlo{1} & \ctrlo{1}& \gate{H} & \ctrlo{1} & \gate{H}&\meter &\cw \\
		\lstick{\ket{0}} & \gate{H}      & \qw   & \ctrl{1}       & \ctrl{1} & \gate{H} & \ctrlo{1} & \gate{H}&\meter&\cw \\
		\lstick{\ket{0}} & \gate{H}      & \qw        & \ctrlo{1}  & \ctrl{1} & \gate{H} & \ctrlo{1} & \gate{H}&\meter &\cw \\
		\lstick{\ket{0}} & \gate{H}      & \gate{Z}        & \targ       & \targ & \qw & \targ &\qw &\qw & \qw \gategroup{2}{2}{5}{3}{.7em}{.} \gategroup{2}{4}{5}{8}{.7em}{.}
	}
	\]
\caption{An example Grover search implementation\label{fig:groverExample}. In this standard circuit implementation of Grover search, the initialization is achieved by the Hadamard operator $H^{\otimes 3}$. Phase kickback from the fourth qubit initialized to the state $\lvert -\rangle$ is used to negate the amplitudes of $\lvert 010\rangle$ and $\lvert 011\rangle$. Finally, diffusion is achieved via another phase kickback from the same qubit.}
\end{figure}

\subsection{Restricted search space}
Let us now consider a search space $S\subsetneq \{0,1\}^n$. In this case, the algorithm is designed to only evolve over the states of $S$, and this is achieved via an initialization operator $A$ such that
\[
A\lvert 0\rangle_n = \frac 1{\sqrt{|S|}}\sum_{i\in S}\lvert i\rangle_n,
\]
And the Grover operator is changed to a reflection around $A\lvert 0\rangle_n$ instead of $\lvert +\rangle^n$:
\[
G = ADA^{\dagger}O.
\]
The usage of $ADA^{\dagger}$ instead of $H^{\otimes n}DH^{\otimes n}$ makes sure that the evolution of the quantum states remains in the subspace spanned by $S$ instead of the entire space $\{0,1\}^n$, and this leads to probability distribution of the measured outcomes being supported on the state $S$.

The only detail missing in this formulation is the construction of the initialization operator $A$. As we shall see in Section 3, for graph coloring problems (and most applications in general), $A$ can be represented as a block matrix and can be implemented in time linear in the number of qubits.

We make a remark here that the graph coloring algorithm of \cite{saha-CircuitDesignColoringNearTerm-2020} uses the unrestricted formulation of Grover's algorithm, but modifies the oracle to discard states that represent invalid colorings. On the other hand, they do not modify their diffusion operator, leading to states outside the search space having positive probabilities of being measured.


\section{Quantum List Coloring Algorithm}
\label{sec:qlistcoloring}
Our goal in this section is to prove Theorem~\ref{thm:qListColoring}. 
For the remainder of this section, assume that $G=(V,E)$ is an arbitrary graph with $|V|=n$, $|E|=m$.
Further, for every vertex $v$, let $L_v$ denote the list of admissible colors for vertex $v$. 
Then the $i$'th admissible color of the vertex $v$ is denoted as $L_v[i]$.
As our algorithm is based on Grover search, we shall discuss the four basic steps of the algorithm: circuit setup, initialization, oracle and diffusion, each in their respective subsections.

\subsection{Setup and qubit labels}
Our algorithm design requires three different qubit registers:
\begin{itemize}
	\item A \textit{vertex register} to keep track of vertex colors. For each vertex $v$, we require the usage of $\lceil\log_2 |L_v|\rceil$ qubits to represent each color in $L_v$.
	Let us denote $j_v := \lceil\log_2 |L_v|\rceil$, and let the qubits corresponding to vertex $v\in V$ be labeled by $q_{v}^{1},\ldots,q_v^{j_v}$.
	
	\item An \textit{edge register} consisting of $m$ qubits, one corresponding to each edge.
	Let $q'_{uv}$ denote the qubit corresponding to an edge $uv\in E$.
	\item A single qubit \textit{ancilla register}, used for phase kickback in Grover's algorithm. Let $q^\ast$ denote this ancilla.
\end{itemize}
The total number of qubits required is $\sum\limits_{v\in V}j_v + m + 1$.
Now we take a closer look at our list coloring algorithm.

\subsection{Initialization}
For each $v\in V$, we initialize qubits $q_v^1,\ldots,q_v^{j_v}$ to a uniform superposition $\chi_v = \frac1{\sqrt{|L_v|}}\sum\limits_{i=0}^{|L_v|-1}\lvert i\rangle$. This can be achieved via a unitary operator $U_v$ such that
\[
U_v\lvert 0\rangle = \frac1{\sqrt{|L_v|}}\sum_{i=0}^{|L_v|-1}\lvert i\rangle.
\]
We show one way of constructing the operator $U_v$. First, consider the standard basis $B = \{\lvert i\rangle: i\in \{0,1,\ldots,2^{v_j}-1\}\}$. We shall replace any one entry $\lvert i\rangle$ with $\chi_v$, where $i\in L_v$: let $B' = \{\chi_v\}\bigcup (B\setminus \{\lvert i\rangle\})$. It can be seen that $\text{Span}(B)=\text{Span}(B')$. We can now consider $B'$ as an ordered basis with its first entry as $\chi_v$, and apply the Gram-Schmidt process to turn $B'$ into an orthonormal basis $B''$~\cite{pursell-gramSchmidtGaussElim-1991}. Note that as $\|\chi_v\|=1$, it remains unchanged in $B''$. The transpose of the coefficients of the vectors in $B''$ constitutes a change of basis operator that maps $\lvert 0\rangle$ to $\chi_v$, and this is how one can construct the matrix for $U_v$.
It can then be implemented with quantum gates using the results of \cite{li-UnitaryDecomposition-2013,mottonen-MultiqubitGates-2004,vartiainen-EfficientQuantumGates-2004} for example.

For the sake of clarity of the above procedure, let us consider an example with $j_v=2$ and $L_v=\{0,1,2\}$. Note that $\chi_v=\frac1{\sqrt3}(\lvert 0\rangle + \lvert 1\rangle + \lvert 2\rangle)$. Then, $B=\{\lvert 0\rangle, \lvert 1\rangle, \lvert 2\rangle, \lvert 3\rangle\}$, and we can take $B' = \{\chi_v, \lvert 1\rangle, \lvert 2\rangle, \lvert 3\rangle\}$. After the Gram-Schmidt process, we obtain
\[
\begin{aligned} B'' = &\left\{\frac1{\sqrt3}\lvert0\rangle\right.+\frac1{\sqrt3}\lvert1\rangle+\frac1{\sqrt3}\lvert2\rangle,\ \\
&-\frac {1}{\sqrt6}\lvert 0\rangle +\frac{\sqrt2}{\sqrt3}\lvert 1\rangle -\frac 1{\sqrt6}\lvert2\rangle,\\
&\left.-\frac1{\sqrt2}\lvert0\rangle+\frac1{\sqrt2}\lvert2\rangle, \lvert 3\rangle \right\}.\end{aligned}
\]
Hence, in this case, we get 
\[U_v = \begin{bmatrix}\frac1{\sqrt3} & \frac 1{\sqrt3} & \frac1{\sqrt3} & 0\\-\frac1{\sqrt6}&\frac{\sqrt2}{\sqrt3}&-\frac1{\sqrt6}& 0\\-\frac1{\sqrt2}&0&\frac1{\sqrt2}&0\\0&0&0&1\end{bmatrix}^\top.\]
Observe that by construction, $U_v\lvert 0\rangle = U_v\begin{bmatrix}1 & 0 & 0 & 0\end{bmatrix}^\top = \begin{bmatrix} \frac1{\sqrt3} & \frac 1{\sqrt3} & \frac1{\sqrt3} & 0\end{bmatrix}^\top$, as desired.

Finally, let us denote $A = \bigotimes\limits_{v\in V}U_v$.
Then, 
\begin{equation}
\label{eq:init-A}
A \otimes \bigotimes\limits_{e\in E} I\otimes ZH
\end{equation}
is the full initialization operator applied to the circuit starting from 
$\lvert 0\rangle_{\sum\limits_{v\in V}j_v}\otimes \lvert 0\rangle_m\otimes\lvert0\rangle$.
This creates the quantum state $\bigotimes\limits_{v\in V}\chi_v \otimes \lvert 0\rangle_m \otimes \lvert -\rangle$.

\subsection{Oracle}
Traditionally for the graph coloring problem, each vertex color is represented using the same number of qubits, and binary comparator circuits~\cite{wang-TernaryGrover-2011} are used to make sure that the two colors corresponding to two adjacent vertices are different.

While this approach is very efficient for the $k$-coloring problem where every vertex has the same set of admissible colors, the list coloring problem may sometimes require a large number of qubits. In fact, the total number of qubits required for implementing a comparator-based oracle would be $n\cdot\max_{v\in V}\log_2\lceil\max L_v\rceil + m + 1$, which can be much higher than our proposed oracle when the $j_v$'s are not all equal.

In short, for every edge $uv\in E$, we shall encode all possible colorings in $L_u\times L_v$ via flipping the amplitudes of the states corresponding to valid colorings 
\[S = \left\{\lvert i_1\rangle_{j_u}\lvert i_2\rangle_{j_v}: \begin{array}{cc}0\le i_1< |L_u|, 0\le i_2<|L_v|,\\ L_u[i_1]\neq L_v[i_2]\end{array}\right\}.\] 
We propose a classical $O(|L_u|^2|L_v|^2)$-time algorithm to construct an efficient oracle $O_{u,v}$ for flipping these amplitudes. In short, $O_{u,v}$ should have the following net effect:
\begin{equation}
\label{eq:oracleReq}
\begin{aligned}
&O_{u,v}\left(\frac1{\sqrt{|L_u||L_v|}}\sum_{\scriptsize\begin{array}{c} 0\le i_1< |L_u|\\ 0\le i_2< |L_v|\end{array}}\lvert i_1\rangle \lvert i_2\rangle\lvert -\rangle\right) \\&=\frac1{\sqrt{|L_u||L_v|}} \left(\sum_{\lvert i_1\rangle\lvert i_2\rangle \in S}-\lvert i_1\rangle \lvert i_2\rangle + \sum_{\lvert i_1\rangle\lvert i_2\rangle \not\in S}\lvert i_1\rangle\lvert i_2\rangle\right)\lvert -\rangle
\end{aligned}
\end{equation}

Given a string $s$ of length $\ell$ and a subset $J=\{j_1,j_2,\ldots,j_r\}\subseteq \{1,2,\ldots,\ell\}$ we use $s_J$ to denote the substring $s_{j_1}s_{j_2}\ldots s_{j_r}$. $\bbF_2$ denotes the finite field of two elements. Our algorithm for implementing $O_{u,v}$ (Algorithm ~\ref{algo:oracle-uv}) makes use of a subroutine called \ref{func:oracleReduction} that can significantly simplify the complexity and the number of controlled not operations required in many cases.

\begin{algorithm}[ht]
	\SetKwInOut{Input}{Input}\SetKwInOut{Output}{Output}
	\Input{Sets $L_u$, $L_v$ (denote $j_u = \lceil \log_2|L_u|\rceil$, $j_v = \lceil \log_2|L_v|\rceil$).}
	\Output{A $j_u+j_v+1$-qubit quantum circuit $O_{u,v}$ satisfying (\ref{eq:oracleReq}).}
	\caption{Oracle $O_{u,v}$ for marking valid colorings.\label{algo:oracle-uv}}
	\BlankLine
	Let $W'=\text{oracleReduction}(L_u, L_v)$\;
	Create a circuit $C$ with quantum wires $q_1,\ldots, q_{j_u+j_v+1}$\;
	\For{every pair $(J,s)$ in $W'$}{
		Add a multicontrolled NOT gate to $C$ with controls on wires $\{q_j: j\in J, s_j=1\}$, anticontrols on wires $\{q_j: j\in J, s_j=0\}$ and target $q_{j_u+j_v+1}$.
	}
	\Return{$C$}
\end{algorithm}

\begin{function}[ht]
	\SetKwInOut{Input}{Input}\SetKwInOut{Output}{Output}
	\Input{Sets $L_u$, $L_v$.}
	\Output{A set of pairs $W'$.}
	\caption{oracleReduction($L_u$, $L_v$)\label{func:oracleReduction}}
	\BlankLine
	Denote $j_u = \lceil \log_2|L_u|\rceil$, $j_v = \lceil \log_2|L_v|\rceil$\;
	Let $A=\{0,\ldots,|L_u|-1\}$ and $B=\{0,\ldots,|L_v|-1\}$\;
	Convert each element of $A$ and $B$ into $\{0,1\}$-strings of lengths $j_u$ and $j_v$, respectively\;
	Let $X=\{ab: a\in A, b\in B, L_u[a]\neq L_v[b]\}$, where ``$ab$" denotes concatenation\;
	Let $Y=\{ab: a\in A, b\in B\}$, then $|Y|=|A||B|=|L_u||L_v|$\;
	Set $W=\varnothing$\;
	\For{$k=1$ \KwTo $j_u+j_v$}{
		\For{every $k$-element subset $J$ of $\{1,2,\ldots,j_u+j_v\}$}{
			\For{every $\{0,1\}$-string $s$ of length $k$}{
				Add a $\{0,1\}$-variable $x_J^s$ to $W$\;
			}
		}
		Create an empty linear system of equations $\mathcal L$ over $\bbF_2$ with variables $W$\;
		\For{every $\{0,1\}$-string $t\in Y$}{
			Calculate the expression $f(W,t) = \sum \{x_J^s\in W: t_{J} = s\}$\;
			\If{$t\in X$}{
				Add $f(W,t)=1$ to $\mathcal L$\;
			}
			\Else{
				Add $f(W,t)=0$ to $\mathcal L$\;
			}
		}
		Solve the $|L_u||L_v|\times \sum_{j=1}^k\binom{j_u+j_v}{j}\cdot 2^j$ system $\mathcal L$ using Gaussian Elimination over $\bbF_2$\;
		\If{the system $\mathcal L$ is solvable}{
			Solve the linear program minimizing $\sum_{x_J^s\in W} |J|\cdot x_J^s$ subject to $\mathcal L$\;\label{algostep:linearprogram}
			Let $W'=\{(J,s) : x_J^s\in W, x_J^s = 1\}$\;
			\Return{$W'$}
		}
	}
\end{function}

We shall now demonstrate the correctness of Algorithm~\ref{algo:oracle-uv}.

\begin{thm}
	Algorithm~\ref{algo:oracle-uv} gives a circuit $C$ satisfying (\ref{eq:oracleReq}). Further, assuming that the cost of implementing a $k$-controlled NOT operation is k (refer to Remark~\ref{rmk:cost-kCNOT}), the cost of $C$ is the smallest among all circuits that can be made using only controlled NOT operations onto the phase flip qubit.
\end{thm}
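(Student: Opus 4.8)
The plan is to split the statement into a \emph{correctness} part (the returned circuit $C$ realizes (\ref{eq:oracleReq})) and an \emph{optimality} part (its cost is minimal), and to translate both into a single question about $\bbF_2$-linear algebra.

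For correctness I would first fix the phase-kickback dictionary. Write $\ell = j_u+j_v$, and for a gate specified by a pair $(J,s)$ let $\phi_{J,s}(t) = \prod_{j\in J}[t_j=s_j]\in\{0,1\}$ be its firing indicator on a basis string $t\in\{0,1\}^\ell$ (so $s_j=1$ is a control and $s_j=0$ an anticontrol). With the target qubit $q_{\ell+1}$ held in $\lvert-\rangle$, a single multicontrolled NOT sends $\lvert t\rangle\lvert-\rangle\mapsto(-1)^{\phi_{J,s}(t)}\lvert t\rangle\lvert-\rangle$, so the circuit built from a gate set $W'$ multiplies $\lvert t\rangle$ by $(-1)^{P(t)}$ with $P(t)=\sum_{(J,s)\in W'}\phi_{J,s}(t)\bmod 2$. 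Since the state produced by initialization is supported on $Y=L_u\times L_v$, only the values $P(t)$ for $t\in Y$ matter, and matching (\ref{eq:oracleReq}) shows the requirement is exactly ``$P(t)=1$ for $t\in X$ and $P(t)=0$ for $t\in Y\setminus X$''. This is precisely the system $\mathcal L$, read as an $\bbF_2$ matrix with rows indexed by $t\in Y$, columns by the candidate gates $(J,s)$, entries $\phi_{J,s}(t)$, and right-hand side $[t\in X]$. Hence any $W'$ solving $\mathcal L$ yields a circuit satisfying (\ref{eq:oracleReq}); termination is guaranteed because at $k=\ell$ the assignment taking one full-control gate $(\{1,\dots,\ell\},t)$ for each $t\in X$ already solves $\mathcal L$, so the loop returns at some $k^\ast\le\ell$.

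For optimality I would reduce an \emph{arbitrary} admissible circuit to a $\{0,1\}$ solution of the full system. Any circuit built only from controlled NOTs onto the phase qubit is a sequence of gates $(J,s)$, and its net phase on each $t$ depends only on the parity with which each $(J,s)$ occurs; deleting even-multiplicity gates and collapsing odd ones to a single copy never increases the cost $\sum|J|$ and preserves the realized phase (a $0$-control gate contributes only a global phase and may be dropped). Thus cost-optimal circuits correspond to minimum-weight $\{0,1\}$ solutions of $\mathcal L_{\mathrm{full}}$ (all degrees up to $\ell$) under the weight $\sum_{(J,s)}|J|\,x_{J,s}$, and the Gaussian-elimination feasibility test together with the stated minimization solve, at each fixed $k$, exactly the min-weight problem restricted to gates of degree $\le k$. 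It therefore remains to prove a single lemma: the \emph{first} feasible degree $k^\ast$ already contains a global optimum, i.e.\ some minimum-weight solution of $\mathcal L_{\mathrm{full}}$ uses no gate of degree exceeding $k^\ast$.

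I expect this degree-truncation lemma to be the main obstacle, and I would attack it by an exchange argument that \emph{must} use the product structure of $Y=L_u\times L_v$ and the inequality shape of the target. Starting from a global minimum-weight solution containing a gate $(J,s)$ with $|J|=d>k^\ast$, the goal is to rewrite its contribution using gates of degree $\le k^\ast$ without raising the total weight. Here the product form of $Y$ should let one factor the constraint matrix as a Kronecker object with respect to the $(a,b)$ split of $t=ab$ (since $\phi_{J,s}(ab)=\phi_{J_u,s_u}(a)\,\phi_{J_v,s_v}(b)$) and run a M\"obius-type inversion \emph{restricted to} $Y$, showing that each high-degree firing indicator agrees on $Y$ with a cheaper low-degree combination once degree $k^\ast$ is feasible. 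I would emphasize that the hypotheses are genuinely needed: for a general subset $Y$ and a general target one can exhibit instances where a single high-degree gate strictly beats every low-degree representation even though a low-degree representation exists, so that stopping at $k^\ast$ would be suboptimal; the proof must rule this out precisely because here $Y$ is a Cartesian product and $X$ its off-diagonal. Verifying that the Kronecker factorization makes the induction on degree cost-monotone is where the real work lies, while the remaining ingredients (phase kickback, the $\bbF_2$ encoding, and exactness of the per-level minimization) are routine.
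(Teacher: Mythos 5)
Your correctness argument is essentially the paper's own: both identify the net phase acquired by a basis state $\lvert t\rangle$ with the parity $f(W,t)$ of gates $(J,s)$ for which $t_J=s$, and both observe that the system $\mathcal L$ encodes exactly the requirement of (\ref{eq:oracleReq}) on the support $Y=L_u\times L_v$. Your termination remark (at $k=j_u+j_v$ the full-control gates indexed by the elements of $X$ already solve $\mathcal L$) is a small addition the paper omits but which is clearly needed for the algorithm to return at all.

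The divergence is in the optimality half, and there you have put your finger on a real gap --- one that is present in the paper's own proof. The paper's entire argument is that the cost of $C$ equals $\sum_{x_J^s\in W}|J|\cdot x_J^s$ and that this quantity is minimized by the program in Step 19. But that program ranges only over gates of degree at most the first feasible level $k^\ast$; nothing in the paper addresses whether a circuit containing a gate of degree greater than $k^\ast$ could be strictly cheaper, nor does it reduce an arbitrary admissible circuit (an arbitrary sequence of controlled NOTs, with repetitions) to a $\{0,1\}$ solution of $\mathcal L$. You supply the second reduction correctly (the gates are diagonal on the control register, hence commute, and even-multiplicity gates cancel without increasing cost), and you correctly isolate the first issue as a degree-truncation lemma on which the whole optimality claim hinges. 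However, you do not prove that lemma: the proposed exchange/Kronecker argument is only a sketch, and you concede the key step is unverified. So your proposal is incomplete at precisely the point where the paper is silent. As written, both your argument and the paper's establish only the weaker claim that $C$ is cheapest among circuits whose gates all have degree at most $k^\ast$; neither establishes minimality over \emph{all} circuits built from controlled NOTs onto the phase-flip qubit. Either the truncation lemma must be proved, or the theorem's second assertion should be weakened accordingly.
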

\begin{proof}
	It is sufficient to verify the action of $C$ on the states $\lvert i_1\rangle \lvert i_2\rangle \lvert -\rangle$, where $i_1< |L_u|$ and $i_2< |L_v|$.
	Notice that a single controlled NOT gate corresponding to a pair $(J,s)$ in $C$ effectively flips the amplitudes of all basis states represented by $\{0,1\}$-strings $x$ of length $j_u+j_v$ for which $x_i=s_i$, $i\in J$.
	In other words, \emph{all} states of the following form are flipped (here $\ast$ denotes a wildcard, and $J=\{j_1,\ldots, j_k\}$, $s=s_1\cdots s_k$): 
	\[
	\lvert \ast\cdots\ast \underset{j_1\text{'th}}{s_1} \ast\cdots \ast \underset{j_2\text{'th}}{s_2}\cdots \underset{\cdots}{\cdots}\cdots \underset{j_k\text{'th}}{s_{k}}\ast \cdots \ast\rangle
	\]
	Thus, after application of all the controlled NOT gates, only states which appeared in an odd number of $(J,s)$-pairs in $W'$ will survive, and those appearing an even number of times will not have their amplitudes flipped.
	
	Let us now fix a string $t = t_1t_2\cdots t_{j_u+j_v}$, where $t_1\cdots t_{j_u}\in L_u$ and $t_{j_u+1}\cdots t_{j_u+j_v}\in L_v$, and analyze the function~\ref{func:oracleReduction} closely. Recall that $A=\{0,\ldots,|L_u|-1\}$, $B=\{0,\ldots,|L_v|-1\}$, $Y=\{ab: a\in A, b\in B\}$ and $X=\{ab:a\in A, b\in B, L_u[a]\neq L_v[b]\}$.
	We now make a crucial observation: the number of times the amplitude of $\lvert t\rangle$ gets flipped by $C$ is
	\[
	\begin{aligned}
	\left|\left\{	(J,s)\in W': t_J=s	\right\}\right| &= \left|\left\{	x_J^s\in W: x_J^s=1, t_J=s	\right\}\right| \\&= f(W,t),
	\end{aligned}
    \]
	as $x_J^s$ are all $\{0,1\}$-valued. Since the linear system $\mathcal L$ over $\bbF_2$ exactly contains the equations $f(W,t)=1$ if $t\in X$ and $f(W,t)=0$ if $t\in Y\setminus X$, any solution to the system $\mathcal L$ will give a correct circuit satisfying (\ref{eq:oracleReq}).
	\hfill{$\blacksquare$}
	
	Now we come to the second assertion of the theorem. Note that for each variable $x_J^s\in W$ with $x_J^s=1$, we require a $|J|$-controlled NOT operation onto the phase flip ancilla. This means that the cost of $C$ is exactly $\sum_{x_J^s\in W} {|J|}\cdot x_J^s$. As we have minimized this cost via a linear program in Step~\ref{algostep:linearprogram} of \ref{func:oracleReduction}, this proves the second claim.
	
\end{proof}

\begin{rmk}
\label{rmk:cost-kCNOT}
Suppose that the actual cost of implementing $k$-controlled operations is $f(k)$. In the construction of \ref{func:oracleReduction}, we could use any $f(k)$ by modifying the objective function in the linear program of Step~\ref{algostep:linearprogram} to $\sum_{x_J^s\in W} f(|J|)\cdot x_J^s$. 
For an increasing function $f(k)$, the optimization problem discourages usage of gates with high number of controls.
It can be seen in Barenco et. al.~\cite{barenco1995elementary}, Corollary 7.4, for example, that $f(k)\le 48k-204$ for $k\ge 7$.
This implies $f(k) = O(k)$, hence our choice of $f(k)=k$ since the presence of constants doesn't change the optimization problem in Step~\ref{algostep:linearprogram}.
Finding the best $f(k)$ is still an active area of research (see, for e.g., \cite{nakanishi2021quantum,krol2021efficient}).
\end{rmk}

\bigskip

We now have all the ingredients required to implement our full oracle, which is presented as Algorithm~\ref{algo:oraclefull} below.
\begin{algorithm}[ht]
\SetKwInOut{Input}{Input}\SetKwInOut{Output}{Output}
\Input{Graph $G=(V,E)$, lists $\{L_v:v\in V\}$, and a $\sum_{v\in V}j_v + m +1$-qubit quantum circuit $C$.}
\Output{$C$ appended with a graph coloring oracle $O$.}
\caption{Full oracle $O$ for marking valid list colorings\label{algo:oraclefull}} 
\BlankLine
\For{every edge $uv\in E$}{
	Apply Oracle $O_{u,v}$ on qubits $q_u^1,\ldots, q_u^{j_u}$, $q_v^1,\ldots, q_v^{j_v}$, $q'_{uv}$ of $C$\;
}
Apply a controlled NOT operation with controls $q'_{\ast}$ and target $q^\ast$\;
\For{every edge $uv\in E$}{
	Apply Oracle $O_{u,v}$ on qubits $q_u^1,\ldots, q_u^{j_u}$, $q_v^1,\ldots, q_v^{j_v}$, $q'_{uv}$ to de-entangle the vertex register of $C$\;
}
\Return{$C$}
\end{algorithm}
\subsection{Diffusion}

Our diffusion operator is very straightforward, and is a direct application of the restricted search space diffusion mentioned in section 2.2. Let $A$ be the initialization operator we implemented in (\ref{eq:init-A}), then a diffusion is achieved by $A D A^\dagger$, where $D=2\lvert 0\rangle\langle 0\rvert - I$ can be implemented by a controlled NOT with anticontrols on each of the vertex qubits and target the phase flip ancilla.

Refer to Figure~\ref{fig:fullalgo} for an illustration of the full list coloring algorithm.
\begin{figure}[h]
	\includegraphics[width=0.5\textwidth]{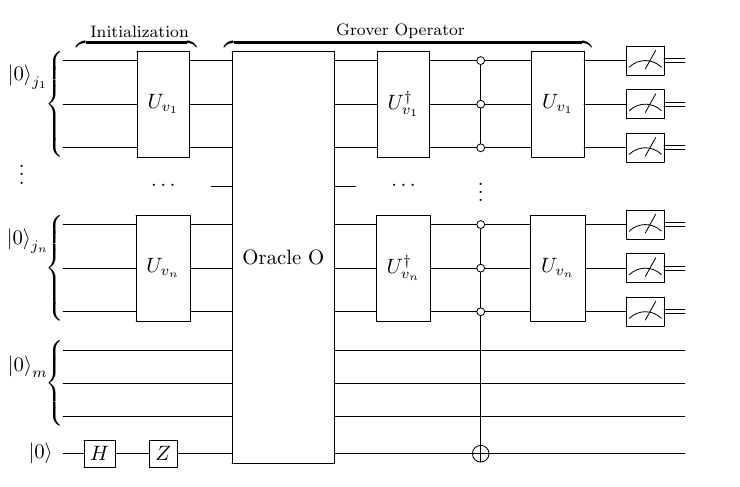}
	\caption{Outline of our list coloring circuit for a single Grover iteration.\label{fig:fullalgo}}
\end{figure}

\section{Results}
\label{sec:results}
We implement our list coloring algorithm in \texttt{python 3.8} using \texttt{blueqat-sdk}.
In order to gauge the efficiency of our result, we run experiments for the $3$ and $4$-coloring problems used in~\cite{saha-CircuitDesignColoringNearTerm-2020} on the the Amazon Statevector simulator.

\subsection{$3$-coloring $K_3$}
$3$-coloring the triangle graph $G=(\{1,2,3\},\{12,23,13\})$ is equivalent to the list-coloring problem on $G$ with $L_1=L_2=L_3=\{\lvert00\rangle, \lvert01\rangle, \lvert10\rangle\}$. In this case, our oracle component $O_{uv}$ obtained from Algorithm~\ref{algo:oracle-uv} is:
	\[
O_{uv}  =   \Qcircuit @C=.7em @R=.7em {
	\lstick{} & \qw & \ctrlo{3} & \qw & \qw \\
	\lstick{} & \qw & \qw & \ctrlo{1} & \qw \\
	\lstick{} & \qw & \qw & \ctrlo{2} & \qw \\
	\lstick{} &\qw  & \ctrlo{1} & \qw & \qw \\
	\lstick{} & \qw & \targ  & \targ & \qw \\
}
    \]
The initialization operator $U_v$ can be written as the following circuit:

\[
U_v  =   \Qcircuit @C=.4em @R=.7em {
	\lstick{} & \gate{R_y(2\arcsin\sqrt{2/3})} & \qw & \qw & \qw & \ctrl{1} & \qw & \qw & \gate{X} & \qw \\
	\lstick{} &\qw &\gate{S} & \gate{H} & \gate{T} & \targ & \gate{T^\dagger} & \gate{H} & \gate{S^\dagger} & \qw \\
}\]
Each Toffoli gate can be decomposed into two-qubit gates using the standard decomposition:
\[
\Qcircuit @C=.5em @R=.7em {
	\lstick{} & \qw      & \qw     &\qw             &\ctrl{2}&\qw     &\qw      & \qw    &\ctrl{2}   & \qw             &\ctrl{1} & \qw & \ctrl{1} & \gate{T} & \qw \\
	\lstick{} & \qw      & \ctrl{1}&\qw             &\qw     &\qw     &\ctrl{1} & \qw         &\qw   & \gate{T^\dagger}&\targ & \gate{T^\dagger} & \targ & \gate{S} & \qw\\
	\lstick{} & \gate{H} & \targ   &\gate{T^\dagger}&\targ   &\gate{T}&\targ &\gate{T^\dagger}&\targ & \gate{T}        &\gate{H}& \qw & \qw & \qw  & \qw\\
}
\]
To implement a \texttt{cccx} gate, we use a clean ancilla qubit to reduce circuit depth as demonstrated below:
\[
\begin{array}{ccc}
\Qcircuit @C=.7em @R=1.2em {
	\lstick{} & \ctrl{1} & \qw\\
	\lstick{} & \ctrl{1} & \qw\\
	\lstick{} & \ctrl{1} & \qw\\
	\lstick{} & \targ    & \qw\\
}\qquad &\stackrel{\qquad}{=} \qquad
&\Qcircuit @C=.7em @R=1.2em {
	\lstick{}            & \ctrl{1} & \qw      &\ctrl{1}&\qw\\
	\lstick{}            & \ctrl{3} & \qw      &\ctrl{3}&\qw\\
	\lstick{}            & \qw      & \ctrl{1} & \qw    &\qw\\
	\lstick{}            & \qw      & \targ    & \qw    &\qw\\
	\lstick{\lvert 0\rangle} &\targ & \ctrl{-1}& \targ  &\qw
}
\end{array}
\]


Finally, we are able to run our circuit on the Amazon Statevector Simulator after decomposing into these elementary single and two-qubit operations. Our resulting histogram for \emph{one Grover iteration} is shown in Figure~\ref{fig:simulatorK3}.
\begin{figure}[ht]
    \includegraphics[width=.5\textwidth]{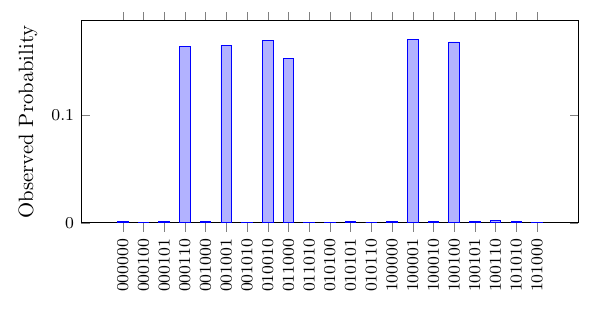}
	\caption{\label{fig:simulatorK3}Observed frequencies for 2000 shots for 3-coloring $K_3$.}
\end{figure}

It is seen that states in the set $A = \{\lvert 000110\rangle,$ $\lvert 001001\rangle,$ $\lvert 011000\rangle,$ $\lvert 011000\rangle,$ $\lvert 100001\rangle,$ $\lvert 100100\rangle\}$ are all massively amplified. Each state of $A$ represents a valid $3$-coloring of $K_3$. Let $B=\{\lvert 00\rangle,\lvert 01\rangle, \lvert 10\rangle\}^{\otimes3}$.
The statevector at the end of our algorithm can be calculated theoretically, and we present the probability density function over the 6-qubit states in Table~\ref{tab:theoreticalProbK3}.
In particular, each state in $A$ has a probability $p_A$ of $0.165066$ to be measured, and states containing $\lvert 11\rangle$ as a color have probability $0$ of being observed. Our algorithm has a single-shot accuracy of $p_A\cdot |A| \approx 0.9904$.

\begin{table}[h]
    \centering
    \begin{tabular}{|c|c|c|}
    \hline
    Set of states & Representation & PDF value\\
    \hline \hline
    Marked states & $A$ & 0.165066 \\
    \hline
    Unmarked but valid combinations & $B\setminus A$ & 0.000457\\
    \hline
    Invalid combinations & $\{\lvert 0\rangle, \lvert 1\rangle\}^{\otimes 6} \setminus B$ & 0\\
    \hline
    \end{tabular}
    \caption{Theoretical probability density for 3-coloring $K_3$.}
    \label{tab:theoreticalProbK3}
\end{table}
 
\subsection{$4$-coloring $K_4$}
Let $G=(\{1,2,3,4\},\{12,13,14,23,24,34,14\})$ be the complete graph on $4$ vertices, and suppose $L_1=L_2=L_3=L_4 = \{\lvert 00\rangle, \lvert 01\rangle, \lvert 10\rangle, \lvert 11\rangle\}$.
In this case the simple Hadamard operator $H^{\otimes 2}$ initializes each vertex register to a uniform superposition of its valid colors, and we can then run Algorithm~\ref{algo:oracle-uv} to figure out the component $O_{uv}$. It turns out that one of the valid solutions minimizing the cost of gates used is the following circuit:
\[
\begin{array}{ccc}
O_{uv} & = &  \Qcircuit @C=.7em @R=.7em {
	\lstick{} & \qw & \ctrl{3} & \qw & \ctrlo{1} & \qw     & \qw \\
	\lstick{} & \qw & \qw & \ctrl{1} & \ctrlo{3} & \qw     & \qw \\
	\lstick{} & \qw & \qw & \ctrl{2} & \qw       &\ctrlo{1}& \qw \\
	\lstick{} &\qw  & \ctrl{1} & \qw & \qw       &\ctrlo{1}& \qw \\
	\lstick{} & \qw & \targ  & \targ & \targ     &\targ    & \qw \\
}
\end{array}.
\]
Figure~\ref{fig:statevecK4} shows the results of running our circuit with \emph{one Grover iteration} on the statevector simulator.
\begin{figure}[H]
\centering
\includegraphics[width=0.5\textwidth]{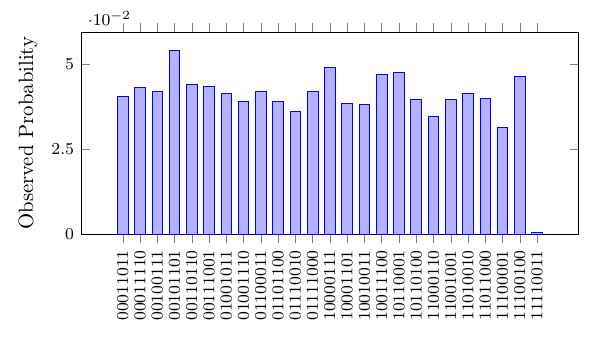}
\caption{\label{fig:statevecK4}Observed frequencies for 2000 shots for 4-coloring $K_4$.}
\end{figure}

In this case, the result of a statevector computation after one Grover iteration is shown in Table~\ref{tab:theoreticalProbK4}.
Here $A'$ denotes the set of all $4!=24$ marked states corresponding to valid colorings of $K_4$, and $B'=\{\lvert 0\rangle, \lvert 1\rangle\}^{\otimes 8}$ is the full state space.

\begin{table}[h]
    \centering
    \begin{tabular}{|c|c|c|}
    \hline
    Set of states & Representation & PDF value\\
    \hline \hline
    Marked states & $A'$ & 0.041657 \\
    \hline
    Unmarked states & $B'\setminus A'$ & $10^{-6}$\\
    \hline
    \end{tabular}
    \caption{Theoretical probability density for 4-coloring $K_4$.}
    \label{tab:theoreticalProbK4}
\end{table}

The result of our experiment can also be seen to follow this distribution.

\subsection{Comparison with Previous Work: $3$-coloring $K_3$}
\begin{figure}[h]
    \centering
    \includegraphics[width=0.5\textwidth]{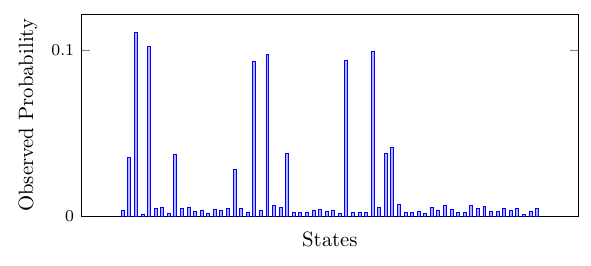}
    \caption{Observed probabilities for 2000 shots for $3$-coloring $K_3$ using the algorithm of \cite{saha-CircuitDesignColoringNearTerm-2020}.}
    \label{fig:sahaResultsK3}
\end{figure}

We take \cite{saha-CircuitDesignColoringNearTerm-2020} as the state-of-the-art result for the $k$-coloring problem.
We compare the performance of our algorithm with theirs for $k=3$.
Figure~\ref{fig:sahaResultsK3} depicts the empirical probability distribution obtained via implementing their algorithm for one Grover rotation and running on the Amazon Statevector simulator.
We observe that our algorithm has much higher probability of selecting a correct state (around $0.9904$), whereas their method has an empirical success rate of $0.5965$ for $2000$ shots.

Finally, we compare the gate count of their circuit versus ours for the 3-coloring $K_3$ problem for one Grover iteration in Table~\ref{tab:gateCountsK3}.

\begin{table}[h]
    \centering
    \begin{tabular}{|c|c|c|}
    \hline
    Gates & Saha et. al.\cite{saha-CircuitDesignColoringNearTerm-2020} & Our Circuit\\
    \hline \hline
    $X$, $H$ & 58 & 66  \\ \hline
    $R_y$, $R_z$ & 0 & 45 \\ \hline
    \texttt{cx} & 20 & 15 \\ \hline
    \texttt{ccx} & 24 & 12 \\ \hline
    \texttt{cccx} & 3 & 2 \\ \hline
    \texttt{ccccx} & 1 & 0 \\ \hline \hline
    Total elementary gates & 686 & 429 \\ \hline
    \end{tabular}
    \caption{Comparison of gate counts for one Grover iteration.}
    \label{tab:gateCountsK3}
\end{table}

The code we used for this comparison is available at the following GitHub repository: \url{https://github.com/Potla1995/Grover-ListColoring}.

\subsection{Experiments on NISQ devices and limitations}
We also ran our circuits on the IonQ physical device. However, reasonable results were not obtained. There might be several possible reasons behind this, such as:
\begin{enumerate}
    \item \texttt{cx} is not a gate natively implemented on the IonQ processor. Although its fully connected topology is ideal for the circuits we construct, controlled operations need to be expressed in terms of M\o{}lmer-S\o{}rensen gates~\cite{molmerSorensenIonQGates1999}, which makes them expensive.
    \item Our circuit has a depth of $30$ when expressed in terms of the elementary gates (including \texttt{cx}). This makes the circuit much more susceptible to errors from noise and decoherence. Although we can implement and run our algorithm on quantum devices currently available, the results suggest that devices with lesser intrinsic noise are needed for practical use of our algorithm.
\end{enumerate}
These limitations suggest that while our algorithm is designed for running on NISQ devices, the circuits generated by it are too complex, giving unusable results due to the intrinsic noise in the currently available quantum devices.
In any case, classical computers can brute-force the list coloring problem for the small-sized graphs that can be currently encoded on an NISQ device using our encoding scheme.


\section{Applications and Concluding Remarks}
\label{sec:conclusion}

The list coloring problem is ubiquitous in real life, as it not only generalizes an already well-appearing problem of graph coloring (scheduling, satisfiability etc.), but is also applicable to several other scenarios, such as:
\begin{itemize}
	\item \textbf{Wireless network Allocation}~\cite{wang-ListColoringWirelessAllocation-2005}:  In a wireless network, each radio is allocated special frequencies which it can connect to. Suppose that radios in close proximity cannot operate on the same frequency due to interference. The problem of which radio is connected to which network frequency can be modeled as a list-coloring problem in the following graph: let the radios be represented by vertices, and add an edge between two radios if they are in close proximity of each other. The lists for each vertex will be the set of available frequencies for its corresponding radio.
	\item \textbf{Register Allocation}: In compiler optimization, register allocation is the process of assigning a large number of target program variables ($n$) onto a small number of CPU registers ($k$), which reduces to a $k$-coloring problem on an $n$-vertex graph. 
	\item \textbf{Sudoku}: We can represent every cell in a sudoku problem with a vertex, and join two vertices with an edge if they are in same row or same column or same block. Given $x$ already filled cells, we can formulate the sudoku problem as a list-coloring problem on $81-x$ vertices and at most $9$ colors.
\end{itemize}

We proposed a Grover search-based quantum algorithm that achieves quadratic speedup in query complexity compared to a classical brute force search, and also proposed a classical algorithm that can simplify the oracle design for several special instances of the list coloring problem. We demonstrate the efficiency of our method in comparison with previous work by running our algorithm on the Amazon statevector simulator for the $3$ and $4$-coloring problems.

Unfortunately, the list coloring problem is difficult to solve both classically and using quantum algorithms, as for generic lists with no known structure, brute force seems to be the only way to attack the problem. As our algorithm is basically a brute force quantum search with some optimizations in the oracle, it is expected to perform better in general cases where the structure of the lists are unknown. However, the existence of clever hybrid algorithms exploiting specific structures for known lists cannot be ignored, and is a very promising future direction.

\bigskip
Finally, we note that one can obtain an improvement on our algorithm by just changing a given list coloring problem to a reduced problem. For example, if $G=(V,E)$ has a vertex $v$ with $|L_v|=1$, we can color $v$ first and remove its color from each $L_u$ such that $uv\in E$, and iterate until all lists have at least two colors.

\section*{Acknowledgments}
Part of this work was completed during a summer internship at Elyah Co. in 2020. The author is grateful to Sydney Andrews, Ry\=utar\=o Nagai and Goutham Rajendran for several helpful discussions and comments; Y\=uichir\=o Minato and Salman Al Jimeely for encouragement and support; and the anonymous reviewers whose insightful comments and suggestions helped drastically improve the quality of the manuscript.

\bibliographystyle{plain}
\bibliography{qListColoring.bbl}
\EOD

\end{document}